\newtheorem{de}{Definition}
\newtheorem{alg}{Algorithm}
\newtheorem{pro}{Proposition}
\newtheorem{thm}{Theorem}
\newtheorem{rem}{Remark}
\begin{document}
\title{An Efficient CCA2-Secure Variant of the McEliece Cryptosystem in the Standard Model}
\author{Roohallah Rastaghi}
\institute{Department of Electrical Engineering, Aeronautical University of Since and Technology 
Tehran, Iran\\
\mail}
\maketitle

\begin{abstract}
Recently, a few chosen-ciphertext secure (CCA2-secure) variants of the McEliece public-key encryption (PKE) scheme in the standard model were introduced. All the proposed schemes are based on encryption repetition paradigm and use general transformation from CPA-secure scheme to a CCA2-secure one. Therefore, the resulting encryption scheme needs \textit{separate} encryption and has \textit{large}  key size compared to the original scheme, which complex public key size problem in the code-based PKE schemes. Thus, the proposed schemes are not sufficiently efficient to be used in practice.

In this work, we propose an efficient CCA2-secure variant of the McEliece PKE scheme in the standard model. The main novelty is that, unlike previous approaches, our approach
is a generic conversion and can be applied to \textit{any} one-way trapdoor function (OW-TDF), the lowest-level security notion in the context of public-key cryptography, resolving a big fundamental and central problem that has remained unsolved in the past two decades.
\keywords {Post-quantum cryptography, McEliece cryptosystem, Permutation algorithm, CCA2 security, Standard model.}
\end{abstract}
\section{Introduction} \label{sec1}
Post-quantum cryptography has obtained great attention in recent years. Code-based cryptography holds a great promise for the post-quantum cryptography, as it enjoys very strong
security proofs based on average-case hardness \cite{22}, relatively fast and efficient encryption/decryption nature, as well as great simplicity.
In the context of code-based cryptography, there are two well-known public-key encryption (PKE) schemes, namely McEliece \cite{13} and Niederreiter \cite{15} PKE schemes. The McEliece encryption scheme was the first PKE scheme based on linear error-correcting codes. It has a very fast and efficient encryption procedure, but it has one big flaw: the size of the public key. Recently, how
to reduce the public key size and how to secure the parameter choice in the code-based cryptography are deeply explored \cite{1,2,3,9,14}.

Semantic security (a.k.a indistinguishability) against adaptive chosen ciphertext attacks (CCA2 security) is one of the strongest known notions of security for the PKE schemes was introduced by Rackoff and Simon \cite{20}. It is possible to produce CCA2-secure variants of the code-based PKE schemes in the random oracle model \cite{4,11,12}, however, CCA2 security in the standard model has not been widely discussed. To the best of our knowledge, only a few papers have touched this research issue.
\subsection{Related work}\label{sec1.1}
There are mainly two class of CCA2-secure code-based PKE schemes in the standard model.
\begin{itemize}
\item {\it CCA-secure schemes based on syndrome decoding problem.} Freeman \textit{et al.} \cite{10} used Rosen-Segev approach \cite{21} to introduce a correlation-secure trapdoor function related to the
hardness of syndrome decoding. Their construction is based on the Niederreiter PKE scheme. Very recently, Preetha Mathew \textit{et al.} \cite{19} proposed a somewhat efficient variant of the
Niederreiter scheme based on lossy trapdoor functions \cite{17}, which avoids encryption repetition paradigm.
\item{\it CCA-secure schemes based on general decoding problem}. The first CCA2-secure variants of the McEliece cryptosystem was introduced by Dowsley \textit{et al.} \cite{5}. They proposed a scheme that resembles the
Rosen-Segev approach trying to apply it to the McEliece PKE scheme. Their construction has some ambiguity. The scheme does not rely on a collection of functions but instead defines a
structure called {\it $k$-repetition} PKE scheme. This is essentially an application of $k$-samples of the PKE to the {\it same} input, in which
the decryption algorithm also includes a verification step on the $k$ outputs. The encryption algorithm produces a signature directly on the McEliece ciphertexts instead of introducing
a random vector $x$ as in the original Rosen-Segev scheme; therefore a CPA-secure variant of the McEliece cryptosystem is necessary to achieve CCA2 security \cite{18}. Very recently, D\"{o}ttling \textit{et al.} \cite{6} showed that Nojima \textit{et al.} \cite{16} randomized version of the McEliece cryptosystem is $k$-repetitions CPA-secure and, as we mentioned earlier, it can obtain CCA2 security by using a strongly unforgeable one-time signature scheme. In a subsequent work, Persichetti \cite{18} proposed a CCA2-secure PKE scheme based on the McEliece assumptions using the original Rosen-Segev approach.
\end{itemize}
\subsection{Motivation}\label{sec1.2}
To date, as we stated above, all the proposed CCA2-secure code-based PKE schemes in the standard model are based on either lossy and correlation-secure trapdoor functions or $k$-repetitions encryption paradigm. Therefore, the resulting encryption schemes are not efficient as they need to run encryption/decryption algorithms several times and use a strongly unforgeable one-time signature scheme to handle CCA2 security related issues. Moreover, in these schemes, excluding the keys of the signature scheme, the public/secret keys are $2k$-times larger than the public/secret keys of the original scheme, which complex the public key length problem in the code-based PKE schemes. Although the Preetha Mathew \textit{et al.}'s scheme \cite{19} avoids $k$-repetitions paradigm, it yet needs to run encryption/decryption algorithms 2-times and the public/secret
keys are larger than the original Niederreiter scheme. Further, it also uses a strongly unforgeable one-time signature scheme to achieve CCA2 security, and so needs separate encryption.
Hence, how to design an efficient CCA2-secure code-based encryption scheme in the standard model is still worth of investigation. This motivates us to investigate new approach for construction efficient such schemes in the standard model without using encryption repetition and generic transformation from CPA-secure schemes to a CCA2-secure one.
\subsection{Our Contributions}
To tackle the above challenging issues, we introduce a randomized encoding algorithm called {\sf PCA} and use it along with the McEliece PKE scheme to construct a CCA2-secure PKE scheme in the standard model. Our contributions in this paper are:
\begin{itemize}
  \item The main novelty is that our construction is a generic conversion and can be applied to any low-level primitive. To further demonstrate the usefulness of our approach, in Section \ref{sec4} we also introduce \textit{direct} ``black-box" construction of a CCA2-secure PKE scheme from any TDF in the standard model, resolving a big fundamental and central problem in the context of public-key cryptography that has remained unsolved in the past two decades.
  \item Our proposed scheme is more efficient, the publick/secret keys are as in the original scheme and the encryption/decryption complexity are comparable to the original scheme.
  \item This novel approach leads to the elimination of the encryption repetition and using strongly unforgeable one-time signature scheme.
  \item This scheme can be used for encryption of long length messages without employing the hybrid encryption method and symmetric encryption.
\end{itemize}
\textbf{Organisation.} In the next section, we briefly explain some mathematical background and definitions. Then, in Section \ref{sec3}, we introduce our proposed scheme. Finally, a generalized construction based on OW-TDFs will be given in Section \ref{sec4}.
\section{Preliminary}\label{sec2}
\subsection{Notation}
We represent a binary string in general by bold face letter such as ${\bf x}=(x_1, \dots x_n)$. \sloppy Regular small font letter $x$ denotes its corresponding \textit{decimal} value, that is $x=\sum_{i=1}^{n}x_i 2^{(n-i)}$ and $\left| \textbf{x} \right|$ denotes its binary length. If $k \in \mathbb{N}$ then $\left\{ {0,\,1} \right\}^k$ denote the set of $k$-bit strings,
$1^k$ denote a string of $k$ ones and $\left\{ {0,\,1} \right\}^*$ denote the set of bit strings of finite length. $y \leftarrow x$ denotes the assignment to \textit{y }of the value
\textit{x}. For a set $S$, $s \leftarrow S$ denote the assignment to $s$ of a uniformly random element of $S$. For a deterministic algorithm ${\cal A}$, we write
$x\leftarrow {\cal A}^{\cal O} (y,\,z)$ to mean that \textit{x} is assigned the output of running ${\cal A}$ on inputs \textit{y} and \textit{z}, with access to oracle ${\cal O}$.
If ${\cal A}$ is a probabilistic algorithm, we may write $x \leftarrow {\cal A}^{\cal O} (y,\,z,\,\,R)$ to mean the output of ${\cal A}$ when run on inputs \textit{y} and \textit{z} with
oracle access to ${\cal O}$ and using the random coins $R$. We denote by $\Pr[E]$ the probability that the event $E$ occurs. If $a$ and $b$ are two strings of bits, we denote by $a\|b$  their concatenation. ${\sf Lsb}_{x_1}(a)$ means the right $x_1$ bits of $a$ and ${\sf Msb}_{x_2}(a)$ means the left $x_2$ bits of \textit{a}.

Since the proposed cryptosystem is code-based, a few notations regarding coding theory are introduced. Let $\mathbb{F}_2$ be the finite field with 2 elements
$\{0, 1\}$, $k\in \mathbb{N}$ be a security parameter. A binary linear-error correcting code ${\cal C}$ of length $n$ and dimension $k$ or an $[n, k]$-code is a $k$-dimensional
subspace of $\mathbb{F}_2^n$. Elements of $\mathbb{F}_2^n$ are called words, and elements of ${\cal C}$ are called codewords. If the minimum hamming distance between any two codewords
is $d$, then the code is a $[n, k, d]$ code. The Hamming weight of a codeword ${\bf x}$, ${\sf wt}({\bf x})$, is the number of non-zero bits in the codeword. For
$t \leq \lfloor \frac{d-1}{2} \rfloor$, the code is said to be $t$-error correcting if it detects and corrects errors of weight at most $t$. Hence, the code can also be represented as a
$[n, k, 2t + 1]$ code. The generator matrix ${\sf G} \in \mathbb{F}_2^{k\times n}$ of a $[n, k]$ linear code ${\cal C}$ is a matrix of rank $k$ whose rows span the code ${\cal C}$.
\subsection{Definitions}
\begin{de}[{\bf Trapdoor functions}] A trapdoor function family is a triple of algorithms ${\sf TDF=(Tdg, F, F^{-1})}$, where {\sf Tdg} is probabilistic and on input $1^k$ generates an evaluation/trapdoor key-pair $(ek, td )\leftarrow {\sf Tdg}(1^k)$. ${\sf F}(ek, \cdot)$ implements a function $f_{ek} (\cdot)$ over $\{0, 1\}^k$ and ${\sf F}^{-1}(td,\cdot)$ implements its inverse $f^{-1}(\cdot)$.\\
\end{de}
\begin{de}[{\bf One-wayness}] Let ${\cal A}$ be an inverter and define its {\sf OW-advantage} against {\sf TDF} as \sloppy
\[
{\rm Adv}_{{\sf TDF},\, {\cal A}}^{ow} (k) = \Pr \left[x = x' :
                                                    \begin{array}{ll}
                                                      (ek , td) \leftarrow  {\sf Tdg}(1^k) ; x \leftarrow  \{0, 1\}^k \\
                                                      y  \leftarrow {\sf F}(ek , x) ; x' \leftarrow  {\cal A}(ek , y)
                                                    \end{array}
                                                  \right].
\]
\end{de}
Trapdoor function {\sf TDF} is one-way if ${\rm Adv}_{{\sf TDF}, {\cal A}}^{ow}(k)$ is negligible for every PPT inverter ${\cal A}$.
\begin{de}[{\bf Circular Shift}] A circular (cyclic) shift is the operation of rearranging the components in a string circularly with a prescribed ‎number of positions. Thus, a $q$-position circular shift (or circular $q$-shift) defines as the operation in which the $i$-th sample, $s_i$, replace with the $(i + q \mod n)$-th sample in a $n$ sample ensemble. We denote this operation by ${\sf CS}_{q,n}(s_i)= s_{(i+q \mod n)}$, $1\leq i\leq n$.
\end{de}
\begin{de}[{\bf General Decoding Problem}]
Given a generator matrix ${\sf G} \in \mathbb{F}_{2}^{k\times n}$ and a word ${\bf m} \in \mathbb{F}_{2}^{n}$, find a codeword $ {\bf c} \in \mathbb{F}_{2}^{k}$ such that $ {\bf e}= {\bf m}-{\bf c}{\sf G}$ has Hamming weight ${\sf wt}({\bf e})\leq t$.
\end{de}
\begin{de} [{\bf General Decoding Assumption}] Let ${\cal C}$ be an $[n, k, d]$-binary linear code defined by a $ k \times n$ generator matrix $\sf G$ with the minimal distance $d$, and
$t \leq \lfloor \frac{d-1}{2} \rfloor$. An adversary ${\cal A}$ that takes an input of a word $ {\bf m} \in \mathbb{F}_{2}^{n}$, returns a codeword ${\bf c} \in \mathbb{F}_{2}^{k}$.
We consider the following random experiment on $GDP$ problem.
\begin{eqnarray*}
&&{\bf Exp}_{\cal A}^{\rm GDP}:\\
&&\quad \quad  {\bf c} \in \mathbb{F}_{2}^{k}\,\, \leftarrow \,\,{\cal A}({\sf G}, {\bf m} \in \mathbb{F}_{2}^{n})\\
&&\quad \quad  {\rm if}\,\,\, {\bf x}= {\bf m}-{\bf c}{\sf G} \,\,{\rm and}\,\, {\sf wt}({\bf x})\leq t\\
&& \quad \quad {\rm then} \ b \leftarrow 1, \ {\rm else} \ b \leftarrow 0\\
&& {\rm return}\ b.
\end{eqnarray*}
We define the corresponding success probability of ${\cal A}$ in solving the GDP problem via
\[
{\bf Succ}_{\cal A}^{\rm GDP} = \Pr [{\bf Exp}_{\cal A}^{\rm GDP}= 1].
\]
Let $\tau \in \mathbb{N}$ and $\varepsilon \in [0, 1]$. We call GDP to be $(\tau, \varepsilon)$-secure if
no polynomial algorithm ${\cal A}$ running in time $\tau$ has success ${\bf Succ}_{\cal A}^{\rm GDP} \geq \varepsilon$.
\end{de}
\begin{de} [{\bf Public-key encryption}] A public-key encryption (PKE) scheme is a triple of probabilistic polynomial time
 (PPT) algorithms $({\sf Gen},\,{\sf Enc},\,{\sf Dec})$ such that:
\begin{itemize}
\item {\sf Gen} is a probabilistic polynomial time key generation algorithm which takes a security parameter $1^n$ as input and
outputs a public key $pk$ and a secret-key $sk$. We write $(pk,sk) \leftarrow {\sf Gen}(1^n )$. The public key specifies the
message space ${\cal M}$ and the ciphertext space ${\cal C}$.
\item {\sf Enc} is a $($possibly$)$ probabilistic polynomial time encryption algorithm which takes as input a public key \textit{pk,} a
$m\in {\cal M}$ and random coins $r$, and outputs a ciphertext $C \in {\cal C}$. We write ${\sf Enc}(pk,m; r)$ to
indicate explicitly that the random coins \textit{r} is used and ${\sf Enc}(pk,m)$ if fresh random coins are used.
\item {\sf Dec} is a deterministic polynomial time decryption algorithm which takes as input a secret-key \textit{sk} and a ciphertext
$C \in {\cal C}$, and outputs either a message $m \in {\cal M}$ or an error symbol $\bot$. We write $m \leftarrow{\sf Dec}(C,\,sk)$.
\item $($Completeness$)$ For any pair of public and secret-keys generated by Gen and any message $m \in {\cal M}$ it holds that
${\sf Dec}(sk,\,{\sf Enc}(pk,m;r))=m$ with overwhelming probability over the randomness used by {\sf Gen} and the random coins \textit{r} used by {\sf Enc}.\\
\end{itemize}
\end{de}
\begin{de} [{\bf CCA2 security}] A public-key encryption scheme is secure against adaptive chosen-ciphertext attacks (i.e. CCA2-secure) if the advantage of any two-stage PPT
adversary ${\cal A} = ({\cal A}_1 ,\,{\cal A}_2 )$ in the following experiment is negligible in the security parameter $k$:
\begin{eqnarray*}
&&  {\bf Exp}_{{\rm PKE}, {\cal A}}^{\sf cca2}(k):\\
&&\quad \quad(pk, sk)\leftarrow {\sf Gen}(1^k)\\
&&\quad \quad(m_0, m_1, {\sf state})\leftarrow {\cal A}_{1}^{{\sf Dec}(sk,\cdot)} (pk) \quad {\rm s.t.} \quad |m_0|=|m_1|\\
&&\quad \quad b\leftarrow \{0, 1\}\\
&&\quad \quad C^{*}\leftarrow {\sf Enc}(pk, m_b)\\
&&\quad \quad b'\leftarrow {\cal A}_{2}^{{\sf Dec}(sk, \cdot)}(C^{*}, {\sf state})\\
&&\quad \quad if \, b \,= \, b^{'} \, return \, 1,\, else\, return\, 0.
\end{eqnarray*}
The attacker may query a decryption oracle with a ciphertext $C$ at any point during its execution, with the exception that ${\cal A}_2 $ is not allowed to query
${\rm Dec} (sk,\,.)$ with {\rm ``challenge"} ciphertext $ C^{*} $. The decryption oracle returns $ b^{'}  \leftarrow {\cal A}_2^{{\rm Dec} (sk,\cdot)} (C^{*} ,\,state)$.
The attacker wins the game if $b = b'$ and the probability of this event is defined as $ \Pr [{\rm Exp} \,_{{\rm PKE} ,\,{\cal A}}^{cca2} \,(k)] $.
 We define the advantage of $ {\cal A}$ in the experiment as
\begin{equation}
{\sf Adv}_{{\rm PKE} ,{\cal A}}^{\sf Ind-cca2} \,(k) = \left| \Pr [{\rm Exp}_{{\rm PKE} ,{\cal A}}^{\sf cca2} \,(k) = 1] - \frac{1}{2} \right|.
\label{eq1}
\end{equation}
\end{de}
\subsection{The McEliece PKE scheme}\label{ssec2.3}
The McEliece PKE consists of a triplet of probabilistic polynomial time algorithms $({\sf Gen}_{\rm McE}, {\sf Enc}_{\rm McE}, {\sf Dec}_{\rm McE})$.
\begin{itemize}
    \item{\bf System parameters.} $q, n, t \in \mathbb{N}$, where $t \ll n$.
    \item{\bf Key Generation.} ${\sf Gen}_{\rm McE}$ take as input security parameter $1^k$ and generate the following matrices:
\begin{enumerate}
  \item A $k \times n$ generator matrix {\sf G} of a code ${\cal G}$ over $\mathbb{F}_q$ of dimension $k$ and minimum distance $d \geq 2t+1$. (A binary irreducible Goppa
code in the original proposal).
  \item A $k \times k$ random binary non-singular matrix {\sf S}
  \item A $n \times n$ random permutation matrix {\sf P}.
\end{enumerate}
Then, ${\sf Gen}$ compute the $k \times n$ matrix ${\sf G}^{pub} = {\sf SGP}$ and outputs a public key $pk$ and a secret key $sk$, where
\[
     pk= ({\sf G}^{\rm pub},t) \quad {\rm and} \quad pk=({\sf S}, D_{\cal G}, {\sf P})
\]
where $D_{\cal G}$ is an efficient decoding algorithm for ${\cal G}$.
\item {\bf Encryption.} ${\sf Enc}_{\rm McE}(pk)$ takes plaintext ${\bf m} \in \mathbb{F}_2^k$ as input and randomly choose a vector ${\bf e}\in \mathbb{F}_2^n$ wit Hamming weight ${\sf wt}({\bf e})=t$
and computes the ciphertext {\bf c} as follows.
\[
{\bf c} = {\bf m}{\sf G}^{\rm pub}\oplus {\bf e}.
\]
\item {\bf Decryption.} To decrypt a ciphertext {\bf c},  ${\sf Dec}_{\rm McE}(sk, {\bf c})$ first calculates
\[
{\bf c}{\sf P}^{-1} = ({\bf m}{\sf S}){\sf G} \oplus {\bf e}{\sf P}^{-1}
\]
and then apply the decoding algorithm $D_{\cal G}$ to it. If the decoding succeeds, output
\[
{\bf m}=({\bf m} {\sf S}){\sf S}^{-1}.
\]
Otherwise, output $\perp$.
\end{itemize}
There are two computational assumptions underlying the security of the McEliece scheme.

{\bf Assumption 1 (Indistinguishability\footnote{This statement is not true in general. See \cite{7,8} for instance.}).} {\it The matrix {\sf G} output by ${\sf Gen}$ is computationally indistinguishable from a uniformly chosen matrix of the same size.}

{\bf Assumption 2 (Decoding hardness).} {\it Decoding a random linear code with parameters $n, k, w$ is hard.}

Note that Assumption 2 is in fact equivalent to assuming the hardness of GDP. It is immediately clear that the following corollary is true.

{\bf Corollary 1.} {\it Given that both the above assumptions hold, the McEliece cryptosystem is one-way secure under passive attacks.}
\section{The proposed cryptosystem}\label{sec3}
In this section, we introduce our conversion. Our construction consists of two parts: 1) Encryption of random coins $r$ using the original McEliece PKE scheme; 2) Randomized encoding of the plaintext, where randomization is done using $r$ (that used for consistency check) based on a heuristic encoding algorithm. Encoding includes
a permutation and combination on the message bits that performs using an algorithm called {\it permutation combination algorithm} (PCA).
\subsection{PCA encoding algorithm}
To encode message $\textbf{m}\in\{0,1\}^n$ with $n\gg k$, we firstly pick coins $\textbf{r}\in\{0,1\}^k, \textbf{r}\ne 0^k,1^k$ uniformly at random, where $k$ is the security parameter. Let ${\sf wt}(\textbf{r})=h$ be its Hamming weight. We divide $\textbf{m}$ into $l$ blocks $(b_1 \| \ldots \|b_l)$ with equal binary length $\lceil n/l\rceil$, where $l=h$ if $h\geq k-h$, else $l= k-h$. If $l\nmid n$, then we should pad $\textbf{m}$. In such cases, we can sample a random binary string (${\sf RBS}$) from $\textbf{r}$, say ${\sf RBS}={\sf Msb}_{l\lceil n / l \rceil-n}(\textbf{r})$, and pad it on the right of $\textbf{m}$\footnote{Note that since $l\in [\lceil k/2\rceil, k-1]$, the length of sampled {\sf RBS}, i.e. $l\lceil n / l \rceil-n$, is smaller than $k$, i.e. the length of $\textbf{r}$. Therefor, for all cases we do not have any problem for sampling RBS from $\textbf{r}$.}. Therefore, if $l\mid n$ then $v = n/l$, ${\sf RBS}= \varphi$ (the empty set) and $b_l ={\sf Lsb}_v\,(\textbf{m})$, else $v = \lceil n/l\rceil$, ${\sf RBS}$ is a random string with length $ l\left\lceil n/l \right\rceil-n$ which sampled from $\textbf{r}$ and $d_l={\sf Lsb}_{(n-(l-1)\left\lceil n/l \right\rceil)}\,(\textbf{m}) \|{\sf RBS}$. Now, we perform a secure permutation on the message blocks $b_i, 1\leq i\leq l$ with the following algorithm.

First, note that any positive integer $s, 1 \leq s\leq l! - 1$ uniquely can be shown as
\[
s=u_1\times (l-1)!+u_2\times (l-2)!+\dots+ u_l\times 0,\quad 0\le u_i\le l-i.
\]
Note that based on this definition we have $u_l=0$. The sequence $U_s=(u_1, \ldots, u_l)$ is called {\it factorial carry value} of $s$. We define original sequence $\textbf{m}^0$ as $\textbf{m}^0=(b_1\| \dots \|b_l)$. Recombine all elements of the
original sequence $\textbf{m}^0$ obtain $l!-1$ new sequences $\textbf{m}^1, \ldots ,\textbf{m}^{{(l! -1)}}$, which any sequence owns a corresponding factorial carry value. Using the factorial carry value of $s$, we can
efficiently obtain any sequence $\textbf{m}^{s}, 1 \leq s\leq l! -1$
with the following algorithm.
\begin{alg}[{\bf PCA encoding algorithm}] \label{alg1}
\end{alg}
\textbf{Input:} Message $\textbf{m}\in\{0,1\}^n$, coins $\textbf{r}\in \{0,1\}^k$ and integer $s, 1 \leq s\leq l! -1$.\\
\linebreak
{\bf Output:} Encoded message $\textbf{y}'=\textbf{m}^{s}=(b_1'\| \ldots \| b_l')$.\\
\linebreak
\textsc{Setup:}
\begin{enumerate}
\item $h \leftarrow {\sf wt}(\textbf{r})$. If $2h\geq k$ then $l\leftarrow h$, else $l\leftarrow k-h$.
\item If ‎$l\mid n$ then ‎set ${\sf RBS}= \varphi$; otherwise, ${\sf RBS}\leftarrow {\sf MSb}_{(l\lceil n /l\rceil-n)}\, (\textbf{r})$.
\item $\textbf{m}' \leftarrow \textbf{m}\| {\sf RBS}$ and divide $\textbf{m}'$ into $l$ blocks $(b_1\| \ldots \| b_l)$ with equal length $v=\lceil n/l\rceil$.
\end{enumerate}
\textsc{Permutation:}
\begin{enumerate}
  \item Write $s$ as $s=\sum_{i=1}^{l-1} {u_i\left( l-i \right)!}+u_l\times 0, \quad 0\le u_i\le l-i.$
  \item For $1\leq i\leq l$:\\
     \vspace{0.cm} \quad If $u_i=0$, then $b_i'\leftarrow b_i$;\\
     \vspace{0.cm} \quad Else, $b_i'\leftarrow b_{i+{u_i}}$, and for $1\leq j\leq u_i$: \\
     \vspace{0.cm} \hspace{1.1 cm} $b_{i+j}'\leftarrow b_i$;
  \item Return $\textbf{y}'=\textbf{m}^s=(b_1'\| \ldots\| b_l')$.
\end{enumerate}
Note that the {\it number} and the {\it length} of the message blocks are variable and changed by $\textbf{r}$.

It is clear that the above encoding algorithm satisfies correctness. Namely, for any $(\textbf{m}, \textbf{r})$ we have
\[
\forall \textbf{m} \in \{0, 1\}^n, \textbf{r} \in \{0, 1\}^k \ {\rm and}\ s\in \mathbb{N} : {\sf PCA}^{-1}({\sf PCA}(\textbf{m},\textbf{r},s),\textbf{r},s)=\textbf{m}.
\]

We illustrate ${\sf PCA}$ encoding algorithm with a small example. Suppose $\textbf{m} = (m_1 , \ldots ,m_{512})$ and $\textbf{r} = (r_1, \ldots, r_{25})$ with
$h = \sum_{i = 1}^{25} {r_i}=12$. Since $2h<k$, thus $l=k-h=13$. Therefore, the algorithm divides $\textbf{m}$ into 13 blocks with equal length $v=\lceil n/l\rceil=\lceil 512/13\rceil=40$. In this case, we have
to sample a string with length $l\lceil n/l\rceil-n=8$ from $\textbf{r}$ and pad it on the right of $\textbf{m}$. Therefore, we have $\textbf{m}' = (\underbrace {m_1 ,\dots,\,m_{40} }_{b_1}\| \underbrace {m_{41} ,\dots,\,m_{80} }_{b_2}\|\ldots \|\underbrace {m_{481},\dots, m_{512} \|r_1,\dots r_8 }_{b_{13}})$.\\
We choose integer $s, 1\leq s\leq 13!-1$, say $s= 4819995015$. We have
\begin{eqnarray*}
&&4819995015=10\times12!+0\times11!+8\times 10!+2\times9!+5\times 8!+4\times 7!+1\times 6!\\
&& \hspace{2.27 cm} + 3\times 5!+0\times 4!+2\times 3!+1\times2!+1\times1!+0
\end{eqnarray*}
Thus, the factorial carry value of $\textbf{m}^{4819995015}$ is $\{10,0,8,2,5,4,1,3,0,2,1,1,0\}$.
Compute sequence $D^{4819995015}$ with its factorial carry value $\{10,0,8,2,5,4,1,3,0,2,1,1,0\}$. We have
\begin{eqnarray*}
&&10--\{b_1, b_2,b_3,b_4,b_5,b_6,b_7,b_8,b_9,b_{10},b_{11},b_{12},b_{13}\}\,\, \rightarrow b_{11}\\
&&0--\{b_1, b_2,b_3,b_4,b_5,b_6,b_7,b_8,b_9,b_{10},b_{12},b_{13}\}\, \, \rightarrow b_1\\
&&8--\{b_2,b_3,b_4,b_5,b_6,b_7,b_8,b_9,b_{10},b_{12},b_{13}\}\, \, \rightarrow b_{10}\\
&&2--\{b_2,b_3,b_4,b_5,b_6,b_7,b_8,b_9,b_{12},b_{13}\}\,\,\rightarrow b_4\\
&& \quad \ \ \vdots\\
&&1--\{b_5,b_6,b_{13}\}\, \, \rightarrow b_6\\
&&1--\{b_5,b_{13}\}\, \, \rightarrow b_{13}\\
&&0--\{b_5\} \,\, \rightarrow b_5
\end{eqnarray*}

Therefore, the permutation of sequence $\textbf{m}^{4819995015}$ is $(b_{11}\| b_1\|b_{10}\|b_4\|b_8\|b_7\| b_3\|b_9\|b_2\|b_{12}\|b_6\|b_{13}\|b_5)$.
\subsection{The proposed scheme}
Now, we are ready to define our conversion. Given a McEliece PKE scheme ${\rm \Pi}=({\sf Gen}_{\rm McE}, {\sf Enc}_{\rm McE}, {\sf Dec}_{\rm McE})$, we transform it into CCA2-secure PKE scheme ${\rm \Pi'}=({\sf Gen}_{cca2}, {\sf Enc}_{cca2}, {\sf Dec}_{cca2})$.\\
\linebreak
{\bf Key Generation.} On security parameter $k$, ${\sf Gen}_{cca2}(1^k)$ run $(sk_{\rm McE},pk_{\rm McE})\leftarrow{\sf Gen}_{\rm McE}(1^k)$ to obtain $sk_{\rm McE}=({\sf S}, D_{\cal G},{\sf P})$ and $pk_{\rm McE}=({\sf G}^{pub}, t)$ as in subsection {\ref{ssec2.3}. It also choose target collision resistant (aka universal one-way) hash function ${\sf T}: \{0,1\}^k \to \{0,1\}^k$ and pseudorandom generator $G: \{0,1\}^k \to \{0,1\}^n$. $pk=\{({\sf S}, D_{\cal G},{\sf P}),G, {\sf T}\}$ is the public key and $sk=\{({\sf G}^{pub}, t),G, {\sf T}\}$ is the secret key.\\
\linebreak
{\bf Encryption.} To create ciphertext, encryption algorithm in some cases performs operations in decimal, and in other cases it performs operations in binary representation of the components. When we do operations in decimal, we show components by regular small fonts, and when we perform operations in binary, we show components by bold face fonts. To encrypt message $\textbf{m} \in \{0,\,1\}^n$ with $n\gg k$, ${\sf Enc}_{cca2}$:
\begin{enumerate}
\item Randomly choose error vector $\textbf{e}\in \{0,1\}^k$ wit Hamming weight ${\sf wt}(\textbf{e}) = t$.
\item Compute $\textbf{r}={\sf T}(\textbf{e})\in\{0,1\}^k, \textbf{r}\ne 0^k, 1^k$. Let ${\sf wt}(\textbf{r})=h$. If $2h>k$ then $l\leftarrow h$, else $l\leftarrow k-h$.
\item Compute $\tilde{\textbf{m}}=\textbf{m}\oplus G(\textbf{r})$.
\item Set $s=\sum_{i=1}^{l-1}u_i(l-i)!+u_l\times 0$, where $u_i=(r+i)\mod (l-i), 1\leq i\leq l-1$ and $u_l=0$, and run PCA encoding algorithm (Algorithm \ref{alg1}) on inputs $(\tilde{\textbf{m}},\textbf{r}, s)$ to generate encoded message $\textbf{y}'={\sf PCA}(\tilde{\textbf{m}},\textbf{r},s)$. Note that we have $U_s=(u_1, \dots, u_{l-1}, 0)$.
\item Perform a circular $q$-shift on the encoded message $\textbf{y}'$ and compute sequence $\textbf{y}={\sf CS}_{q,|\textbf{y}'|}(\textbf{y}')$, where $q= r \mod n$. Note $n$ and so $q< |\textbf{y}'|$.

\item Compute
\end{enumerate}
\[
C_1=(hy+\bar{r})r+z , \quad C_2={\sf Enc}_{\rm McE}(pk, \textbf{r};\textbf{e})=\textbf{r}{\sf G}^{pub}\oplus \textbf{e},
\]
where $r, y$ are the corresponding decimal value of $\textbf{r}$ and $\textbf{y}$. $\bar{r}$ is the decimal value of the complement of $\textbf{r}$ and $z=\sum_{i=1}^{l-1}u_i$.

As we know, in hybrid PKE schemes XOR alone cannot perfectly hide challenge bit to the CCA2 adversary. To handle CCA2 security related issues, we increase \textit{obfuscation} of the XORed message by a) perform a randomized encoding on its bits and b) perform a secure circular shift on the bits of encoded message, whose shift step depends on the value of $r$. Moreover, we disguise encoded message and conceal its bits by setting $C_1= (hy+\bar{r})r+z$ in order to \textit{decrease malleability} of the ciphertext. (See Proposition \ref{pro1}). Therefore, the CCA2 adversary to extract any useful information about challenge bit from $C_1$ must first recover the \textit{same} coins $\textbf{r}$ that was used to create the ciphertext from the McEliece PKE scheme, which is impossible if the McEliece PKE scheme be secure.\\
\linebreak
{\bf Decryption.}
To recover message $\textbf{m}$ from $C = (C_1, C_2 )$, ${\sf Dec}_{cca2}$ perform the following steps.
\begin{enumerate}
  \item Compute coins $\textbf{r}$ as $\textbf{r}= {\sf Dec}_{\rm McE}(sk_{\rm McE},C_2)$ and retrieve error vector $\textbf{e}=C_2\oplus \textbf{r}{\sf G}^{pub}$
\item Check whether
\begin{equation}\label{eq2}
\textbf{r}\stackrel{?}{=}{\sf T}(\textbf{e})
 \end{equation} holds\footnote{In deterministic code-based PKE schemes such as Niederreiter PKE scheme, we don't need to perform this checking. In these schemes since encryption algorithm is deterministic, each message has \textit{one} pre-image. Therefore if $C_2\ne C_2^*$, then $\textbf{r}={\sf Dec}(C_2, sk)\ne {\sf Dec}(C_2^*, sk)=\textbf{r}^*$. But in the McEliece encryption scheme, for $i$-th and $j$-th unit vectors $e_i$ and $e_j$ (with $i \neq j$) if ${\sf wt}(e , e_i) = 1$ and ${\sf wt}(e , e_j)=0$, then $C_2' = (\textbf{r}{\sf G}^{pub}\oplus \textbf{e})\oplus e_i \oplus e_j$ is a correct ciphertext, since the Hamming weight of $\textbf{e} \oplus e_i \oplus e_j$ is $t$. Therefore, queried ciphertext of the form $(C_1, C_2\oplus e_i\oplus e_j)$ may leaks information of the original message. Thus, we need to check well-formedness of the ciphertext and reject
such maliciously-formed one.} and reject if not ({\it consistency} check). If it holds compute ${\sf wt}(r)=h$. If $2h\geq k$ then $l\leftarrow h$, else $l\leftarrow k-h$.
\item Compute $s=\sum_{i=1}^{l-1}u_i(l-i)!$, $U_s=(u_1, \dots, u_{l-1}, 0)$ and $z=\sum_{i=1}^{l-1}u_i$, where $u_i=(r+i)\mod (l-i)$.
\item Compute
\begin{equation}\label{eq3}
y = \frac{(C_1-z)/r-\bar{r}}{h},
 \end{equation}
and reject if $y$ is not a $(l\lceil n/l\rceil)$-bit integer (consistency check). Note that if $l\mid n$ then $|y|=n$.
\item Compute $\textbf{y}'={\sf CS}_{q,|\textbf{y}|}^{-1}(\textbf{y})$, where ${\bf y}$ is the binary representation of $y$. $q=r\mod n$ and $|\textbf{y}'|=|\textbf{y}|$.
\item Compute $\tilde{\textbf{m}}={\sf PCA}^{-1}(\textbf{y}',\textbf{r},s)$.
 \item Compute $\textbf{m}'= \tilde{\textbf{m}}\oplus G(\textbf{r})$ and ${\sf RBS}={\sf Msb}_{l\lceil n / l \rceil-n}(\textbf{r})$. Check wether
 \begin{equation}\label{eq4}
{\sf Lsb}_{l\lceil n / l \rceil-n}(\textbf{m}')\stackrel{?}{=}{\sf RBS}
 \end{equation} holds and reject if not (consistency check). If it holds output
  \begin{equation}\label{eq5}
 \textbf{m}={\sf Msb}_{l\lceil n / l \rceil-n}(\textbf{m}'),
  \end{equation}
 else output $\bot$.
\end{enumerate}

\begin{rem} \label{rem1}
Note that if either $\textbf{y}$ or $\textbf{r}$ be illegal values, then the decryption algorithm outputs a \textit{random} string.
\end{rem}
\begin{pro}\label{pro1} The ciphertext $C_1$ is non-malleable.
\end{pro}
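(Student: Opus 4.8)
The plan is to argue that, from the adversary's viewpoint, the integer $C_1 = (hy+\bar r)r + z$ is bound so tightly to the secret coins $\textbf{r}$ (which determine $h,r,\bar r,l,z$, the permutation index $s$, and the shift $q$) that any $C_1'\neq C_1^*$ which survives the decryption-side consistency checks (\ref{eq3}) and (\ref{eq4}) implicitly recovers $\textbf{r}$, contradicting the one-wayness of the McEliece scheme (Corollary 1), the target-collision-resistance of ${\sf T}$, and the pseudorandomness of $G$. First I would pin down the working notion: given the challenge ciphertext $(C_1^*,C_2^*)$ and the public key, a PPT adversary cannot output $(C_1',C_2')$ with $C_1'\neq C_1^*$ such that ${\sf Dec}_{cca2}(sk,\cdot)$ returns a message meaningfully related to the one underlying the challenge. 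I would then split on whether $C_2^*$ is reused.

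Case 1: $C_2'=C_2^*$. Decryption recovers the same $\textbf{r}$, hence the same $h,r,\bar r,l,z$. Step (\ref{eq3}) sets $y'=((C_1'-z)/r-\bar r)/h$ and rejects unless $y'$ is an integer of the prescribed bit length; a one-line computation shows this forces $C_1'-z\equiv r\bar r\pmod{rh}$, i.e. $C_1'\equiv C_1^*\pmod{rh}$, so every non-rejected mauling has the form $C_1'=C_1^*+\lambda\, rh$ with $\lambda\neq 0$, and then $y'=y+\lambda$. The crux is that the modulus $rh$ is unpredictable to the adversary: $\textbf{r}$ is uniform over $\{0,1\}^k\setminus\{0^k,1^k\}$, it is hidden inside $C_2^*$ by one-wayness of McEliece, $G(\textbf{r})$ and the encoding parameters $s,q$ leak nothing efficiently usable, and with overwhelming probability $r\geq 2^{k/2}$ (so $rh$ is large and not concentrated on any single value). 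Hence the probability of producing a non-zero multiple of $rh$ is negligible. Moreover, even a "lucky" $\lambda\neq 0$ propagates $y'=y+\lambda$ through ${\sf CS}^{-1}_{q,|\textbf{y}|}$, ${\sf PCA}^{-1}(\cdot,\textbf{r},s)$, and the XOR with $G(\textbf{r})$, changing $\textbf{m}'$ in an $\textbf{r}$-dependent way that, with overwhelming probability, fails the padding check (\ref{eq4}) because the ${\sf RBS}$ bits occupy a position fixed by $\textbf{r}$ inside the permuted-and-shifted block layout.

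Case 2: $C_2'\neq C_2^*$. Then the recovered coins are $\textbf{r}'={\sf Dec}_{\rm McE}(sk,C_2')$ and decryption additionally requires $\textbf{r}'\stackrel{?}{=}{\sf T}(\textbf{e}')$ with $\textbf{e}'=C_2'\oplus\textbf{r}'{\sf G}^{pub}$ by (\ref{eq2}). Producing such a well-formed $C_2'$ (other than by the benign re-randomizations $e_i\oplus e_j$ discussed in the footnote, which are ruled out exactly by this check) amounts to breaking one-wayness of McEliece or finding a collision for ${\sf T}$; and if $\textbf{r}'$ is instead an independent value, it is uncorrelated with the challenge, so no meaningful relation between $\textbf{m}'$ and the challenge message can survive. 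Combining the two cases gives the claim.

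I expect the main obstacle to be making the two unpredictability statements of Case 1 quantitatively precise without an honest random nonce to lean on: one must show that the arithmetic envelope $C_1=(hy+\bar r)r+z$, together with checks (\ref{eq3})--(\ref{eq4}), genuinely behaves like an extractable commitment to $(\textbf{r},\textbf{y})$, i.e. that $rh$ is not merely secret but spread out enough that guessing a non-zero multiple is negligible, and that the induced perturbation of $\textbf{m}'$ is detected by the ${\sf RBS}$ test. This also requires carefully disposing of the degenerate boundary cases (e.g. very low-weight $\textbf{r}$, or $z\geq r$), which arise only with negligible probability but should be accounted for rather than ignored.
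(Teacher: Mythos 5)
Your proposal is correct at the paper's own level of rigor and rests on the same central mechanism as the paper's proof: any mauled $\hat C_1$ that survives the integrality check in decryption must satisfy $\hat C_1 - C_1 = \lambda\, h r$ for some integer $\lambda \ne 0$, and producing such an offset requires the secret $r$. Where you differ is in generality and in how the secrecy of $r$ is invoked. The paper keeps $C_2$ fixed, reduces ``without loss of generality'' to a single bit flip $\hat{\textbf{y}} = \textbf{y}\oplus e_i$, writes $\hat C_1 = C_1 \pm 2^i h r$, and simply asserts a forgery probability of $2^{-k}$; you instead derive the full characterization $C_1' \equiv C_1^* \pmod{rh}$ (hence $y'=y+\lambda$) directly from the divisibility and length check, treat arbitrary $\lambda$, add the ${\sf RBS}$ padding check as a second line of defence, and handle the $C_2'\ne C_2^*$ branch (which the paper relegates to the proof of Theorem~\ref{thm1}) via the consistency check, one-wayness and target collision resistance of ${\sf T}$. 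You also frame the unpredictability of $rh$ correctly as a computational statement: since $\textbf{r}$ is determined by $C_2^*$, nothing is information-theoretically hidden, so the paper's bare ``$2^{-k}$'' is not justified as stated, and the bound must come from a reduction to the one-wayness of McEliece (and the properties of ${\sf T}$ and $G$). The obstacle you honestly flag is real: converting ``the adversary outputs a nonzero multiple of $rh$'' into an inverter (an extraction or hidden-number-style argument showing $rh$ is spread out and that a lucky $\lambda$ is caught by the ${\sf RBS}$ test) is not carried out in your plan, but it is equally absent from the paper's proof, which glosses over it with the unsupported exact bound; your version makes the missing step explicit rather than hiding it.
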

\begin{proof} We say that $(\hat{C_1}, C_2)$ is a valid forgery on $(C_1, C_2)$ if $\hat{\textbf{y}}$ differ from $\textbf{y}$ in some bits, where $\hat{\textbf{y}}$ and $\textbf{y}$ are the corresponding binary representation of $\hat{y}= ((\hat{C_1}-z)/r-\bar{r})/h$ and $y = ((C_1-z)/r-\bar{r})/h$ respectively. If the adversary can produce such $\hat{C_1}$, then he can guess challenge bit from
$\hat{\textbf{m}}={\sf Msb}_{l\lceil n / l \rceil-n}(\hat{\tilde{\textbf{m}}}\oplus G(\textbf{r}))$. Without loss of generality, we assume $\hat{\textbf{y}}$ and $\textbf{y}$ differ in $i$-th bit. Namely $\hat{\textbf{y}}=\textbf{y}\oplus e_i$, where $e_i$ is the $i$-th unit vector. Thus, we should have $\hat{y}=y\pm 2^i$. That is, $\hat{C_1}=(h\hat{y}+\bar{r})r+z=C_1\pm 2^{i}hr$. Since secret coins $r\in\{0,1\}^k$ is not known to the adversary, thus the probability that the adversary produces a valid forgery on $C_1$ is negligible and it is $2^{-k}$. This is because in the encryption algorithm we set $C_1=(hy+\bar{r})r+z$.
\end{proof}
\begin{thm}\label{thm1}
Suppose ${\rm \Pi}=({\sf Gen}_{\rm McE},{\sf Enc}_{\rm McE}, {\sf Dec}_{\rm McE})$ be the McEliece PKE scheme. Then, the proposed scheme ${\rm \Pi}'=({\sf Gen}_{cca2},{\sf Enc}_{cca2}, {\sf Dec}_{cca2})$ is CCA2-secure in the standard model.
\end{thm}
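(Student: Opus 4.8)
The plan is to establish CCA2 security through a short sequence of games, starting from the real experiment ${\bf G}_0={\bf Exp}_{{\rm \Pi}',{\cal A}}^{\sf cca2}(k)$ and ending at a game ${\bf G}_{\rm fin}$ whose challenge ciphertext is statistically independent of the secret bit $b$, and to bound each transition by the pseudorandomness of $G$, the target-collision-resistance of ${\sf T}$, or the one-wayness of the McEliece scheme (equivalently, Assumptions~1 and~2 through Corollary~1). Fix a PPT adversary ${\cal A}=({\cal A}_1,{\cal A}_2)$ and write $C^{*}=(C_1^{*},C_2^{*})$ for the challenge, with $C_2^{*}=\textbf{r}^{*}{\sf G}^{pub}\oplus\textbf{e}^{*}$, $\textbf{r}^{*}={\sf T}(\textbf{e}^{*})$, and $C_1^{*}=(hy^{*}+\bar{r}^{*})r^{*}+z$ formed by the encryption algorithm on $\textbf{m}_b$.

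The first and principal step is to pass from ${\bf G}_0$ to a game ${\bf G}_1$ whose decryption oracle is answered \emph{without} $sk_{\rm McE}$: on a query $C=(C_1,C_2)$ the simulator outputs $\bot$ unless $C$ is well-formed, meaning that $C_2$ decodes to some $\textbf{r}$ with $\textbf{r}={\sf T}(\textbf{e})$ for the recovered $\textbf{e}$ (condition~\eqref{eq2}), the relation~\eqref{eq3} returns a legal $(l\lceil n/l\rceil)$-bit integer $y$, and the redundancy test~\eqref{eq4} holds, in which case it completes the honest decryption and returns $\textbf{m}$. To bound $|{\bf G}_0-{\bf G}_1|$ one shows that, except with negligible probability, every accepted query was actually built the honest way, namely by choosing $\textbf{e}$, setting $\textbf{r}={\sf T}(\textbf{e})$ and $C_2=\textbf{r}{\sf G}^{pub}\oplus\textbf{e}$, and then forming $C_1$ from $(\textbf{m},\textbf{r})$, so the simulator can indeed reproduce the decryption. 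The alternatives are ruled out as follows: mauling $C_2^{*}$ by an $e_i\oplus e_j$ shift, the case discussed in the footnote to~\eqref{eq2}, fails because Proposition~\ref{pro1} forbids the matching change of $C_1$ while condition~\eqref{eq4} detects an unchanged one; fabricating a fresh $C_2$ that still decodes consistently would need a second preimage of ${\sf T}$, contradicting target-collision-resistance; and an accidental pass of the shape and redundancy checks happens only with negligible probability.

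In ${\bf G}_2$ the plan is to replace $G(\textbf{r}^{*})$ in the challenge by a uniformly random $n$-bit string, a transition bounded by the pseudorandomness of $G$ together with the one-wayness of McEliece. Here I would use an intermediate hybrid that first substitutes a fresh uniform value for $\textbf{r}^{*}={\sf T}(\textbf{e}^{*})$, so that an externally given inversion instance can be planted as $C_2^{*}$ while still respecting the $\textbf{r}={\sf T}(\textbf{e})$ correlation; the only dependence of ${\cal A}$'s view on $\textbf{r}^{*}$ beyond $G(\textbf{r}^{*})$ is then through $C_2^{*}$ and the already key-independent decryption oracle, so a successful distinguisher yields either a PRG breaker or a McEliece inverter. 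Once $G(\textbf{r}^{*})$ is uniform, $\tilde{\textbf{m}}^{*}=\textbf{m}_b\oplus G(\textbf{r}^{*})$ is uniform and independent of $b$, and the padding with ${\sf RBS}$, the permutation produced by Algorithm~\ref{alg1}, and the circular $q$-shift are all bijections determined by $\textbf{r}^{*}$ alone; hence $\textbf{y}^{*}$, and therefore $C_1^{*}=(hy^{*}+\bar{r}^{*})r^{*}+z$, is distributed independently of $b$, so in ${\bf G}_2={\bf G}_{\rm fin}$ the whole challenge carries no information about $b$. Collecting the bounds gives ${\sf Adv}^{\sf Ind-cca2}_{{\rm \Pi}',{\cal A}}(k)\le{\rm Adv}^{\rm prg}_{G}(k)+{\rm Adv}^{ow}_{\rm McE}(k)+{\rm Adv}^{tcr}_{{\sf T}}(k)+{\rm negl}(k)$.

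I expect the ${\bf G}_0\to{\bf G}_1$ step to be the main obstacle. Unlike constructions built from lossy or correlation-secure trapdoor functions, this scheme offers the reduction no explicit trapdoor for answering decryption queries, so the whole argument rests on showing that the purely syntactic checks~\eqref{eq2}--\eqref{eq4}, reinforced by Proposition~\ref{pro1}, already force every accepted ciphertext to be one the simulator can recognise and correctly decrypt. Making this airtight, in particular handling queries that recycle $C_2^{*}$, queries built from $\textbf{e}^{*}$ or a near-collision of ${\sf T}$, and the exact probability that a random $C_1$ survives the integer-shape test, is the delicate part. A secondary difficulty is the self-referential role of $\textbf{r}^{*}$ in the step ${\bf G}_1\to{\bf G}_2$: it must remain hidden for the pseudorandomness argument, yet it is tied to $\textbf{e}^{*}$ by ${\sf T}$ and reused throughout $C_1^{*}$, which is why the reduction needs the hybridisation sketched above rather than a direct appeal to one-wayness.
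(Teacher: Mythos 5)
Your plan is a standard game-hopping reduction, which is quite different from what the paper does (the paper argues directly, by a case analysis of queries relative to the challenge, that any mauled ciphertext either fails a consistency check or decrypts to a ``random'' string; it never constructs a simulator or invokes PRG/TCR security in a reduction). But your proposal contains a genuine gap, precisely at the step you yourself flag as the main obstacle: the transition ${\bf G}_0\to{\bf G}_1$ cannot be realized for this scheme. All of the consistency checks you want the simulator to apply --- recovering $\textbf{r}$ and $\textbf{e}$ from $C_2$, testing $\textbf{r}={\sf T}(\textbf{e})$, computing $y$ from $C_1$, and testing the ${\sf RBS}$ redundancy --- can only be evaluated \emph{after} decoding $C_2$, i.e.\ after applying ${\sf Dec}_{\rm McE}$ with the trapdoor. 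A simulator without $sk_{\rm McE}$ cannot even decide acceptance, let alone output $\textbf{m}$. Your argument that ``every accepted query was built the honest way, so the simulator can reproduce the decryption'' is a plaintext-awareness argument that has no footing in the standard model: ${\sf T}$ is an ordinary hash, there is no oracle transcript from which to extract the adversary's $\textbf{e}$, and the target-collision-resistance of ${\sf T}$ is irrelevant because a querying adversary simply picks a fresh $\textbf{e}'$, sets $\textbf{r}'={\sf T}(\textbf{e}')$ and $C_2'=\textbf{r}'{\sf G}^{pub}\oplus\textbf{e}'$; this is perfectly well-formed, involves no collision, and is still undecryptable by your simulator. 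Since the scheme (unlike the adaptive-, lossy- or correlated-product-based constructions) provides no alternative decryption mechanism, the decryption oracle needed in ${\bf G}_1$ and in the subsequent reductions simply is not available, and the chain of hybrids collapses.

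A second, independent gap is in ${\bf G}_1\to{\bf G}_2$: you bound the replacement of $G(\textbf{r}^{*})$ by a uniform string using ``PRG security plus one-wayness of McEliece,'' but one-wayness only says that $\textbf{r}^{*}$ cannot be \emph{fully} recovered from $C_2^{*}$; it does not make the seed pseudorandom given $C_2^{*}$, nor does it exclude partial leakage of $\textbf{r}^{*}$, which is all a PRG distinguisher needs. Moreover $\textbf{r}^{*}$ is reused as an arithmetic operand in $C_1^{*}=(hy^{*}+\bar{r}^{*})r^{*}+z$ and as the source of $h,l,s,q$ and ${\sf RBS}$, so the view depends on $\textbf{r}^{*}$ through much more than $G(\textbf{r}^{*})$ and $C_2^{*}$. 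Turning this step into a reduction would require a hardcore-bit/KEM-style argument (or a stronger assumption than OW), which your sketch does not supply. In short, your instincts about where the difficulty lies are correct, but the two transitions you rely on cannot be instantiated from the stated assumptions, so the proposal does not yield a proof; note also that the paper's own case analysis sidesteps rather than resolves these issues, since it asserts information-theoretic hiding without ever showing how decryption queries are answered in a reduction.
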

\begin{proof} The encryption algorithm uses coins $r$ to encrypt challenge message. In the encryption algorithm, we do not use any cryptographic primitives to be able to reduce CCA security of the proposed PKE scheme to the hardness or security of them. Note that we only use the McEliece PKE scheme to encrypt the coins $r$ and encryption of the challenge message independent of its output.\\
In the proof of security, we exploit the fact that for a given ciphertext, we can recover the message if we know the \textit{same} encoded message $y$ and randomness $r$ that was used to create the ciphertext. We stress that if either $y$ or $r$ is not legal values, then the output of the decryption algorithm is \textit{random}. Thus, the challenge bit is information-theoretically hidden to the CCA2 adversary, and so, his advantage in guessing challenge bit is 0.

Let $C^*=(C_1^*,C_2^*)$ be the challenge ciphertext, where $C_1^* = (h^*y^*+\bar{r^*})r^*+z^*$ and $\textbf{y}^*={\sf PCA}(\textbf{m}_b\oplus G(\textbf{r}^*),\textbf{r}^*)$. Denote the secret randomness used to encrypt $\textbf{m}_b$ by $\textbf{r}^*$. Assume towards contradiction that there is an efficient adversary ${\cal A}$ breaking CCA2 security of the proposed PKE scheme with non-negligible probability. That is, the adversary ${\cal A}$ can guess challenge bit with non-negligible probability at least from one of the below cases. Since a decryption query on the challenge ciphertext is forbidden by the CCA2-experiment, thus if $C_1=C_1^*$, then $C_2\ne C_2^*$ and vice versa. Therefore, there are three possible cases:\\
\linebreak
\textbf{Case1.} $C=(C_1,C_2)\ne (C_1^*,C_2^*)$. In this case, the decryption oracle takes as input $(C_1,C_2)$ and compute $\textbf{r}={\sf Dec}_{\rm McE}(C_2)\in \{0,1\}^k$. If $\textbf{r}=\textbf{r}^*$ while $C_2\ne C_2^*$, then the decryption oracle will reject in (\ref{eq2}). It also computes $l$, $u_i=(r+i)\mod (l-i),1\leq i\leq l-1$, $s=\sum_{i=1}^{l-1}u_i(l-i)!$ and $z=\sum_{i=1}^{l-1}u_i$. In the worst case, we assume $y = ((C_1-z)/r-\bar{r})/h$ is a $(n+l\lceil n/l\rceil-n)$-bit integer. That is, there is an integer $y$ such that $C_1=(hy+\bar{r})r+z$. The decryption oracle computes $\textbf{y}'={\sf CS}_{r \mod n,|\textbf{y}|}^{-1}(\textbf{y})$ and decodes $\textbf{y}'$ based on recovered coins $\textbf{r}$ and computed value $s$. We have $\tilde{\textbf{m}}={\sf PCA}^{-1}(\textbf{y}',\textbf{r},s)\ne{\sf PCA}^{-1}(\textbf{y}'^*,\textbf{r}^*,s^*)=\tilde{\textbf{m}}^*$, even we assume $\textbf{y}'=\textbf{y}'^*$. If we also assume condition (\ref{eq4}) holds (i.e., ${\sf Lsb}_{l\lceil n / l \rceil-n}(\tilde{\textbf{m}}\oplus G(\textbf{r}))={\sf Msb}_{l\lceil n / l \rceil-n}(\textbf{r})$), then the decryption oracle outputs random string $\textbf{m}={\sf Msb}_{l\lceil n / l \rceil-n}(\tilde{\textbf{m}}\oplus G(\textbf{r}))$ in (\ref{eq5}). Since $\textbf{m}$ is a random string, thus challenge bit is information-theoretically hidden to the CCA2 adversary, and so, his advantage to guess challenge bit is 0.\\
\linebreak
\textbf{Case2.} $C=(C_1^*,C_2\ne C_2^*)$. In this case, the decryption oracle takes as input $(C_1^*,C_2)$ and compute $\textbf{r}={\sf Dec}_{\rm McE}(C_2)$. If $\textbf{r}=\textbf{r}^*$ while $C_2\ne C_2^*$, then the decryption oracle will reject in (\ref{eq2}). In the worst case, we assume $y=((C_1^*-z)/r-\bar{r})/h$ is a $(n+l\lceil n/l\rceil-n)$-bit integer and $\textbf{y}'={\sf CS}_{r \mod n,|\textbf{y}|}^{-1}(\textbf{y})=\textbf{y}'^*$. Since $\textbf{r}$ is illegal, i.e. $\textbf{r}\ne \textbf{r}^*$ (and so $s\ne s^*$), thus $\tilde{\textbf{m}}={\sf PCA}^{-1}(\textbf{y}',\textbf{r},s)\ne{\sf PCA}^{-1}(\textbf{y}'^*,\textbf{r}^*,s^*)=\tilde{\textbf{m}}^*$. If we also assume condition (\ref{eq4}) is hold, then the decryption algorithm outputs random string $\textbf{m}={\sf Msb}_{l\lceil n / l \rceil-n}(\tilde{\textbf{m}}\oplus G(\textbf{r}))$ in (\ref{eq5}). Therefore, challenge bit is information-theoretically hidden to the CCA2 adversary, and so, his advantage to guess challenge bit is 0.\\
\linebreak
\textbf{Case3.} $C=(C_1\ne C_1^*,C_2^*)$. In this case, the decryption oracle takes as input $(C_1,C_2^*)$ and compute $\textbf{r}={\sf Dec}_{\rm McE}(C_2^*)=\textbf{r}^*$. It also computes $y=((C_1-z^*)/r^*-\bar{r^*})/h^*\ne y^*$ \footnote{In this case we have $y\ne y^*$. If $y= y^*$, then we have $C_1=(h^*y+\bar{r^*})r^*+z^*= (h^*y^*+\bar{r^*})r^*+z^*=C_1^*$, which is a contradiction since a decryption query on the challenge ciphertext is forbidden by the CCA2-experiment.}. In the worst case, we assume $y$ is an integer. We consider tree possible cases for $y$:
\begin{enumerate}
\item [a)] $y$ is a multiple of $y^*$. That is, for any $k \in \mathbb{N}, k\ne1$ we have $y=ky^*$. In this case $|y|=|k||y^*|\ne|y^*|=n+l\lceil n/l\rceil-n$ and the decryption oracle reject in (\ref{eq3}).
\item [b)] $|\textbf{y}|=|\textbf{y}^*|$, and, $\textbf{y}$ and $\textbf{y}^*$ are differ from each other only in some bits. In this case, $\textbf{y}'={\sf CS}_{q^* ,|\textbf{y}|}^{-1}(\textbf{y})$ and $\textbf{y}'^*={\sf CS}_{q^* ,|\textbf{y}|}^{-1}(\textbf{y})$ are also differ from each other only in some bits. Therefore, the CCA2 adversary can guess challenge bit from $\textbf{m}'={\sf PCA}^{-1}(\textbf{y}',\textbf{r}^*,s^*)\oplus G(\textbf{r}^*)$. Without loss of generality, we can assume $\textbf{y}=\textbf{y}^*\oplus e_i$, where $e_i$ is the $i$-th unit vector. Thus we have $y=y^*\pm 2^i$, where $y$ and $y^*$ are the corresponding decimal value of $\textbf{y}$ and $\textbf{y}^*$. Therefore, we should have $C_1=(y^*h^*+\bar{r^*})r^*+z^*\pm 2^{i}h^*r^*=C_1^*\pm 2^{i}h^*r^*$. Since secret coins $r^*\in \{0,1\}^k$ is not known to the CCA2 adversary, thus the probability that CCA2 adversary produce a forgery on $C_1$ is negligible and it is $2^{-k}$. Therefore, the CCA2 adversary's advantage in this case is negligible (see also Proposition\ref{pro1}).
\item [c)] $|\textbf{y}|=|\textbf{y}^*|$, and, $\textbf{y}\ne \textbf{y}^*$ (and so $\textbf{y}'= {\sf CS}_{q^* ,|\textbf{y}|}^{-1}(\textbf{y})$) is a random string. In this case $\textbf{m}'={\sf PCA}^{-1}(\textbf{y}',\textbf{r}^*,s^*)\oplus G(\textbf{r}^*)$ also is a random string and the decryption oracle outputs random string ${\sf Msb}_{l\lceil n/l\rceil-n}(\textbf{m}')$. Thus, challenge bit is information-theoretically hidden to the CCA2 adversary, and so, his advantage to guess challenge bit is 0.
\end{enumerate}

From \textbf{Case1}, \textbf{Case2} and \textbf{Case3}, the CCA2 adversary advantage to guess challenge bit is negligible. This contradicts the assumption that the CCA2 adversary can break CCA2 security of the proposed PKE scheme with non-negligible probability.
\end{proof}
\subsection{Performance analysis}
The performance-related issues can be discussed with respect to the computational complexity of key generation, key sizes, encryption and decryption speed.

The resulting encryption scheme is very efficient. The public/secret keys are roughly as
in the original scheme. The time for computing ${\sf T}(\cdot), G(\cdot)$ and the time for encoding and decoding is negligible compared to the time for computing ${\sf Enc}_{\rm McE}$ and ${\sf Dec}_{\rm McE}$. Encryption roughly needs one application of ${\sf Enc}_{\rm McE}$, and decryption roughly needs one application of
${\sf Dec}_{\rm McE}$.

As we previously stated, the Niederreiter-based proposed scheme does not need to perform well-formedness checking. Therefore, compared to Freeman \textit{et al.}\cite{10} and Mathew \textit{et al.}\cite{19} schemes, our scheme is more efficient. The comparison of the proposed schemes with existing schemes are presented in Table 1.
\begin{center}
{\small {\bf Table 1.} Comparison with other proposed CCA2-secure code-based cryptosystems}
\begin{tabular}{|c|c|c|c|c|c|}
  \hline
  {\bf Scheme} & {\bf Public-key} & {\bf Secret key} & {\bf Ciphertext} & {\bf Encryption} & {\bf Decryption} \\
 &  & & Size  & {\bf Complexity} & {\bf complexity} \\
\hline
  Dowsley  and& $2k \times pk_{\rm McE}$ & $2k \times sk_{\rm McE}$ & $k \times {\sf Ciph}_{\rm McE}$ & $k \times {\sf Enc}_{\rm McE}+$ & $1\, {\sf Ver}_{\cal {OT-SS}}$+\\
 D\"{o}ttling &  & &  &  & \\
\textit{et al.}\cite{5,6} &  &  &  &$1\, {\cal OT-SS}$ & $1 \times {\sf Dec}_{\rm McE}$+\\
&  & &  &  &  $t \times {\sf Enc}_{\rm McE}$\\
&  & &  &  &  \\
Freeman&$2k \times pk_{\rm Nie}$ & $2k \times sk_{\rm Nie}$ & $k \times {\sf Ciph}_{\rm Nie}$  & $k \times {\sf Enc}_{\rm Nie}+$ &$1\, {\sf Ver}_{\cal {OT-SS}}+$\\
\textit{et al.}\cite{10} & & & &$1\,{\cal OT-SS}$ &$1 \times {\sf Dec}_{\rm Nie}$+\\
&  & &  &  &  $t \times {\sf Enc}_{\rm Nie}$\\
&  & &  &  &  \\
Mathew & $1 \, pk_{\rm Nie}+$ & $2 \times sk_{\rm Nie}$ & $2 \times {\sf Ciph}_{\rm Nie}$ & $2 \times {\sf Enc}_{\rm Nie}$+ &$1\, {\sf Ver}_{\cal {OT-SS}}+$\\
 \textit{et al.}\cite{19}& $1\,(n\times n)$ &&&1 {\sf MM}+&$1 \times {\sf Dec}_{\rm Nie}$+\\
& {\rm Matrix} & &  &$1\, {\cal OT-SS}$ &$2 \times {\sf Enc}_{\rm Nie}$+ \\
&  & &  &  &1 {\sf MM}\\
&  & &  &  & \\
Proposed &$\thickapprox 1\, pk_{\rm McE}$ & $\thickapprox 1\, sk_{\rm McE}$ & $\thickapprox2 \, {\sf Ciph}_{\rm McE}$ & $\thickapprox 1 \, {\sf Enc}_{\rm McE}$ & $\thickapprox 1 \,{\sf Dec}_{\rm McE} $\\
Scheme & & & $+n$ & & \\
\hline
\end{tabular}
\end{center}
McE: McEliece cryptosystem, Nie: Niederreiter cryptosystem, {\sf Ciph}: Ciphertext, {\sf Ver}: Verification, ${\cal OT-SS}$: Strongly unforgeable one-time signature scheme, P: Product,
D: Division, {\sf MM}: Matrix Multiplication, {\sf PCA}: Permutation Combination Algorithm (algorithm {\ref{alg1}}), ${\sf PCA}^{-1}$: Reverse Permutation Combination Algorithm and $t\leq k$.\\
\section{General construction from TDFs} \label{sec4}
Devising public-key encryption schemes which are secure against chosen ciphertext attack from low-level primitives has been the subject of investigation by many researchers. Currently,
the minimal security assumption on trapdoor functions need to obtain CCA2-secure PKE schemes, in terms of ``black-box" implications, is that of {\it adaptivity} was proposed by Kiltz,
Mohassel and O'Neill in Eurocrypt 2010 \cite{11}. They proposed a black-box \textit{one}-bit CCA2-secure encryption scheme and then apply a transform of Myers and shelat \cite{16} from one-bit to multi-bit CCA-secure encryption scheme. The Myers-shelat conversion is not efficient; it uses encryption reputation paradigm along with a strongly unforgeable one-time
signature scheme to handle CCA2 security related issues. Therefore, the resulting encryption scheme needs \textit{separate} encryption and it is not sufficiently efficient to be used
in practice.

Here, we give \textit{direct} black-box construction of a CCA2-Secure PKE scheme from TDFs. Our construction is similar to the construction of Section \ref{sec3}. We only need to replace the underlying code-based PKE scheme with a OW-TDF. Let ${\sf TDF = (Tdg, F, F^{-1})}$ be an {\it injective} TDF. We construct multi-bit PKE scheme ${\sf PKE[TDF] =(Gen, Enc, Dec)}$ as follows:\\
\linebreak
{\bf Key Generation.} On security parameter $k$, the generator ${\sf Gen}$ runs ${\sf Tdg}$ to obtain $(ek,td)\leftarrow {\sf Tdg}(1^k)$ and return $(ek,td)$. It also chooses PRG $G: \{0,1\}^k\to \{0,1\}^n$. $pk=(ek, G)$ is the public key and $sk=(td,G)$ is the secret key.\\
\linebreak
{\bf Encryption.} On inputs $(\textbf{m}, ek)$, where $\textbf{m} \in \{0,\,1\}^n$, ${\sf Enc}$ perform as follows:
\begin{enumerate}
\item Choose coins $\textbf{r}\in \{0,1\}^k, \textbf{r}\ne 0, 1^k$ uniformly at random and let ${\sf wt}(\textbf{r})=h$. If $2h>k$ then $l\leftarrow h$, else $l\leftarrow k-h$.
\item Compute $\tilde{\textbf{m}}=\textbf{m}\oplus G(\textbf{r})$.
\item Set $s=\sum_{i=1}^{l-1}u_i(l-i)!+u_l\times 0$, where $u_i=(r+i)\mod (l-i), 1\leq i\leq l-1$ and $u_l=0$, and run ${\sf PCA}$ encodding algorithm (Algorithm \ref{alg1}) on inputs $(\tilde{\textbf{m}},\textbf{r}, s)$ to generate encoded message $\textbf{y}'={\sf PCA}(\tilde{\textbf{m}},\textbf{r},s)$.
\item Perform a circular $q$-shift on the encoded message $\textbf{y}'$ and compute $\textbf{y}={\sf CS}_{q,|\textbf{y}'|}(\textbf{y}')$, where $q= r \mod n$.
\item Compute
\end{enumerate}
\[
C_1  = (hy+\bar{r})r+z , \quad C_2={\sf F}(ek, \textbf{r}),
\]
where $\bar{r}$ is the decimal value of the complement of $\textbf{r}$ and $z=\sum_{i=1}^{l-1}u_i$.\\
\linebreak
{\bf Decryption.} On inputs $(C, td)$, ${\sf Dec}$ perform as follows:
\begin{enumerate}
  \item Compute coins $\textbf{r}$ as $\textbf{r}= {\sf F}^{-1}(C_2, td)$. Compute $\bar{r}$ and $h={\sf wt}(r)$. If $h\geq k-h$ then $l\leftarrow h$, else $l\leftarrow k-h$.
\item Compute $U_s=(u_1, \dots, u_{l-1}, 0)$ and $z=\sum_{i=1}^{l-1}u_i$, where $u_i=(r+i)\mod (l-i)$ and $s=\sum_{i=1}^{l-1}u_i(l-i)!$.
\item Compute
\[
y = \frac{(C_1-z)/r-\bar{r}}{h},
\]
and reject the ciphertext if $y$ is not a ($l\lceil n/l\rceil$)-bit integer.
  \item Compute $\textbf{y}'={\sf CS}_{q,|\textbf{y}|}^{-1}(\textbf{y})$. $q=r\mod n$ and $|\textbf{y}'|=|\textbf{y}|$.
\item Compute $\tilde{\textbf{m}}={\sf PCA}^{-1}(\textbf{y}',\textbf{r},s)$.
 \item Compute $\textbf{m}'= \tilde{\textbf{m}}\oplus G(\textbf{r})$ and ${\sf RBS}={\sf Msb}_{l\lceil n / l \rceil-n}(\textbf{r})$. Check wether
 \begin{equation}\label{eq4}
{\sf Lsb}_{l\lceil n / l \rceil-n}(\textbf{m}')\stackrel{?}{=}{\sf RBS}
 \end{equation} holds and reject if not. If it holds output
  \begin{equation}\label{eq5}
 \textbf{m}={\sf Msb}_{l\lceil n / l \rceil-n}(\textbf{m}'),
  \end{equation}
 else output $\bot$.
\end{enumerate}

\begin{thm}\label{thm2}
Let ${\sf TDF}$ be a one-way trapdoor function, then the ${\sf PKE[TDF]}$ defined above is CCA2-secure.\\
The proof of Theorem\ref{thm2} is similar to the proof of Theorem\ref{thm1} which is omitted.
\end{thm}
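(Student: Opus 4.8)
The plan is to follow the template of the proof of Theorem~\ref{thm1}, simplified by the fact that here the underlying primitive ${\sf F}(ek,\cdot)$ is an \emph{injective}, deterministic function, so the well-formedness check on the second ciphertext component that was needed in the code-based setting (where an error vector could be mauled by flipping two unit positions) is superfluous: distinct values of $C_2$ necessarily decrypt to distinct coins. Concretely, fix the challenge ciphertext $C^{*}=(C_1^{*},C_2^{*})$ with $C_2^{*}={\sf F}(ek,\textbf{r}^{*})$, $C_1^{*}=(h^{*}y^{*}+\bar{r^{*}})r^{*}+z^{*}$, and $\textbf{y}^{*}$ the circular-shifted ${\sf PCA}$-encoding of $\textbf{m}_b\oplus G(\textbf{r}^{*})$. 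Assuming towards a contradiction a PPT adversary ${\cal A}=({\cal A}_1,{\cal A}_2)$ with non-negligible CCA2 advantage, I would analyse an arbitrary decryption query $(C_1,C_2)\neq(C_1^{*},C_2^{*})$ in the three cases left open by the prohibition $C=C^{*}$.

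First I would handle the two cases in which $C_2\neq C_2^{*}$. Injectivity of ${\sf F}$ gives $\textbf{r}={\sf F}^{-1}(td,C_2)\neq\textbf{r}^{*}$, hence the quantities $l,s,z$ recomputed by ${\sf Dec}$ differ from $l^{*},s^{*},z^{*}$; in particular the factorial-carry value $U_s$ and the shift $q=r\bmod n$ are not those used to form $C^{*}$. Granting the adversary the most favourable situation — that $y=((C_1-z)/r-\bar r)/h$ is an integer of the right bit-length, that $\textbf{y}'={\sf CS}^{-1}_{q,|\textbf{y}|}(\textbf{y})$ even coincides with $\textbf{y}'^{*}$, and that the final ${\sf RBS}$-consistency test passes — the value $\tilde{\textbf{m}}={\sf PCA}^{-1}(\textbf{y}',\textbf{r},s)$ is the decoding of $\textbf{y}'^{*}$ under the \emph{wrong} permutation $s\neq s^{*}$, so by Remark~\ref{rem1} the recovered $\textbf{m}={\sf Msb}_{l\lceil n/l\rceil-n}(\tilde{\textbf{m}}\oplus G(\textbf{r}))$ is distributed independently of $b$, and the oracle answer carries no information about the challenge bit.

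Next I would treat the remaining case $C_2=C_2^{*}$, $C_1\neq C_1^{*}$, in which $\textbf{r}=\textbf{r}^{*}$ but necessarily $y\neq y^{*}$ (else $C_1=(h^{*}y+\bar{r^{*}})r^{*}+z^{*}=C_1^{*}$). Mirroring Case~3 of Theorem~\ref{thm1}, split according to $\textbf{y}$: if $|\textbf{y}|\neq|\textbf{y}^{*}|$ the ciphertext is rejected at the bit-length test; if $\textbf{y}$ and $\textbf{y}^{*}$ differ in only a few positions, say $\textbf{y}=\textbf{y}^{*}\oplus e_i$, then $C_1$ would have to equal $C_1^{*}\pm 2^{i}h^{*}r^{*}$, which by Proposition~\ref{pro1} (non-malleability of $C_1$) the adversary produces only with probability $2^{-k}$ since $r^{*}$ is hidden; and if $\textbf{y}$ is otherwise unrelated to $\textbf{y}^{*}$, then $\textbf{m}'={\sf PCA}^{-1}(\textbf{y}',\textbf{r}^{*},s^{*})\oplus G(\textbf{r}^{*})$ is again random by Remark~\ref{rem1} and the answer is independent of $b$. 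Summing the negligible contributions over all queries would then contradict the assumed non-negligible advantage.

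The step I expect to be the real obstacle — the place where an honest write-up must do genuine work rather than invoke ``randomness'' — is the reduction to one-wayness of ${\sf TDF}$: namely, that the challenge bit is information-theoretically (or at least computationally) hidden \emph{precisely} when the adversary has not effectively inverted ${\sf F}$ on $C_2^{*}$. A direct reduction ${\cal B}$ receiving $(ek,y^{*})$ and aiming to output $\textbf{r}^{*}$ must hand ${\cal A}$ a challenge with $C_2^{*}=y^{*}$, yet $C_1^{*}=(h^{*}y^{*}+\bar{r^{*}})r^{*}+z^{*}$ is defined entirely through $\textbf{r}^{*}$ (via $h^{*}$, $\bar{r^{*}}$, $z^{*}$, the shift, and the ${\sf PCA}$-encoding), which ${\cal B}$ does not know — so ${\cal B}$ cannot even form $C^{*}$ without extra structure such as partial-domain hardness or a random oracle. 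Making the claim of Remark~\ref{rem1} rigorous — that for \emph{adaptively chosen} queries with $\textbf{r}\neq\textbf{r}^{*}$ the decryption output is truly independent of $b$, and that the event ``$y$ is an integer of the correct length and all consistency checks pass'' does not covertly correlate the oracle's answer with $\textbf{m}_b$ — is exactly the crux, and is where the argument is most fragile and would need the most care.
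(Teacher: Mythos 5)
Your write-up is essentially the argument the paper intends: the paper gives no separate proof of Theorem~\ref{thm2} at all (it merely asserts similarity to Theorem~\ref{thm1} and omits it), and your three-case transplant of that proof, together with the observation that injectivity of ${\sf F}$ makes the consistency check on $C_2$ superfluous, matches the paper's own reasoning, including the footnote it makes about deterministic (Niederreiter-style) encryption of the coins. So as a reconstruction of the omitted proof, you have not missed anything the paper supplies.

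However, the obstacle you name in your final paragraph is a genuine gap, and neither your sketch nor the paper closes it. The case analysis only discusses what the decryption oracle returns on queries $C\neq C^{*}$; nowhere is it argued why the challenge ciphertext \emph{itself} hides $b$, and no reduction to the one-wayness of ${\sf TDF}$ (nor to the pseudorandomness of $G$) is ever constructed. The claim of information-theoretic hiding cannot be correct as stated: since ${\sf F}$ is injective, $C_2^{*}$ determines $\textbf{r}^{*}$, hence $(C_1^{*},C_2^{*})$ determines $\textbf{m}_b$, so any hiding is at best computational and must be established by a reduction. But, as you observe, an inverter ${\cal B}$ given $(ek,C_2^{*})$ cannot even simulate the challenge, because $C_1^{*}=(h^{*}y^{*}+\bar{r^{*}})r^{*}+z^{*}$ is an explicit arithmetic function of the unknown preimage $\textbf{r}^{*}$ (through $h^{*},\bar{r^{*}},z^{*}$, the shift $q^{*}$ and the ${\sf PCA}$ permutation), and the one-wayness experiment gives no handle on these quantities. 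Related weaknesses you correctly flag also remain unaddressed: Remark~\ref{rem1}'s assertion that decryption under wrong coins yields a ``random'' string is unsubstantiated (the oracle is deterministic and its answers are correlated with $\textbf{r}^{*}$ and $\textbf{m}_b$ in ways never bounded), and the $2^{-k}$ forgery bound in the analogue of Case~3(b) treats $r^{*}$ as uniformly hidden even though the adversary holds ${\sf F}(ek,\textbf{r}^{*})$, where again only a computational argument could apply. In short, your reconstruction is faithful to the paper, but the decisive step --- a genuine reduction showing that a distinguishing adversary inverts ${\sf F}$ or breaks $G$ --- is missing from both, and the theorem cannot be considered proved without it.
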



\begin{thebibliography}{00}
\bibliographystyle{model1-num-names}
\bibliography{<your-bib-database>}
\bibitem{1} T. Berger, P. Cayrel, P. Gaborit and A. Otmani. Reducing key length of the mceliece cryptosystem. In {\it AFRICACRYPT 2009}, LNCS, Vol. 5580. pp.77-97, 2009.
\bibitem{2} D. Bernstein, T. Lange and C. Peters. Attacking and defending the mceliece cryptosystem. In {\it PQCrypto 2008}, LNCS, Vol.5299. pp.31-46, 2008.
\bibitem{3} D. Bernstein, T. Lange, C. Peters and H. van Tilborg. Explicit bounds for generic decoding algorithms for code-based
cryptography. In {\it WCC 2009}, pp.168-180, 2009.
\bibitem{4} P. L. Cayrel, G. Hoffmann, E. Persichetti. Efficient Implementation of a CCA2-Secure Variant of McEliece Using Generalized Srivastava Codes. In {\it PKC 2012}, LNCS, Vol. 7293, pp 138-155, 2012.

\bibitem{5} R. Dowsley, J. M\"{u}ller-Quade, A. C. A. Nascimento. A CCA2 Secure Public Key Encryption Scheme Based on the McEliece Assumptions in the Standard Model. In {\it CT-RSA 2009}, LNCS, Vol. 5473, pp. 240–251.
\bibitem{6} N. D\"{o}ttling, R. Dowsley, J. M. Quade and A. C. A. Nascimento. A CCA2 Secure Variant of the McEliece Cryptosystem. \textit{IEEE, Transactions on Information Theory}, Vol. 58(10), pp.6672-6680, 2012.
\bibitem{7} J.-C. Faug\`{e}re, A. Otmani, L. Perret, J.-P. Tillich. Algebraic Cryptanalysis
of McEliece Variants with Compact Keys. In \textit{EUROCRYPT 2010}, pp. 279-298, 2010.

\bibitem{8} J.-C. Faug\`{e}re, V. Gauthier, A. Otmani, L. Perret, J.-P. Tillich. A
Distinguisher for High Rate McEliece Cryptosystems. \textit{IEEE Information
Theory Workshop (ITW)}, pp. 282–286, 2011.

\bibitem{9} M. Finiasz and N. Sendrier. Security bounds for the design of code-based cryptosystems. In {ASIACRYPT 2009}, LNCS, Vol.5912, pp. 88-105, 2009.

\bibitem{10} D.-M. Freeman, O. Goldreich, E. Kiltz, A. Rosen, G. Segev, More Constructions of Lossy and Correlation-Secure Trapdoor
    Functions. In {\it PKC 2010}, LNCS, Vol.6056, pp.279–295, 2010.

\bibitem{11}E. Kiltz, P. Mohassel, and A. O'Neill. Adaptive trapdoor functions and chosen-ciphertext security. In {\it EUROCRYPT 2010}, LNCS, Vol. 6110 pp. 673–692, 2010.

\bibitem{12} K. Kobara and H. Imai. Semantically Secure McEliece Public-Key Cryptosystems Conversions for McEliece PKC. In \textit{PKC 2001}, LNCS, Vol.1992, pp. 19-35, 2001.

\bibitem{13} R. Lu, X. Lin, X. Liang and X. Shen. An efficient and provably secure public key encryption scheme based on coding theory. In {\it Security Comm. Networks},Vol.4 (19), pp. 1440-1447, 2011.
\bibitem{14} R. McEliece. A public-key cryptosystem based on algebraic number theory. \textit{Technical report, Jet Propulsion Laboratory}. DSN Progress Report pp. 42-44, 1978.

\bibitem{15} R. Misoczki and P. Barreto. Compact mceliece keys from goppa codes. In \textit{SAC'2009}, LNCS, Vol.5867. pp.376-392, 2009.

\bibitem{16}S. Myers and A. Shelat. Bit encryption is complete. In {\it FOCS 2009}, pp. 607–616. IEEE Computer Society Press, 2009.

\bibitem{17} H. Niederreiter. Knapsack-type cryptosystems and algebraic coding theory. {\it Probl. Control and Inform. Theory}, Vol.15, pp.19–34, 1986.

\bibitem{18} R. Nojima, H. Imai, K. Kobara and K. Morozov. Semantic Security for the McEliece Cryptosystem without Random Oracles. \textit{Designs, Codes and Cryptography}, Vol. 49, No. 1-3, pp. 289-305, 2008.

\bibitem{19} C. Peikert and B. Waters. Lossy trapdoor functions and their applications. In \textit{STOC 2008}, pp. 187-196, 2008.

\bibitem{20} E. Persichetti. On a CCA2-secure variant of McEliece in the standard model. Cryptology ePrint Archive: Report 2012/268. \url{http://eprint.iacr.org/2012/268.pdf}

\bibitem{21} K. Preetha Mathew, S. Vasant, S. Venkatesan and C. Pandu Rangan. An Efficient IND-CCA2 Secure Variant of the Niederreiter Encryption Scheme in the Standard Model. In {\it ACISP 2012}, LNCS, Vol.7372, pp. 166–179, 2012.

\bibitem{22} C. Rackoff and D. Simon. Noninteractive Zero-knowledge Proof of Knowledge and Chosen Ciphertext Attack.  In \textit{ CRYPTO 91}, LNCS, Vol. 576, pp. 433-444, 1992.

\bibitem{23} A. Rosen and G. Segev. Chosen-Ciphertext Security via Correlated Products. In {\it TCC 2009}, LNCS, Vol. 5444, pp. 419-436, 2009.

\bibitem{24} N. Sendrier. The tightness of security reductions in code-based cryptography. In {\it IEEE, Information Theory Workshop (ITW)}, pp.415-419, 2011.
\end{thebibliography}
\end{document}